\documentclass[conference]{IEEEtran}
\IEEEoverridecommandlockouts

\usepackage{cite}
\usepackage{graphicx, xcolor} 
\usepackage{amsfonts, amsmath, amssymb, amsthm, dsfont}
\usepackage[nocomma]{optidef}
\usepackage{enumitem}



\newtheorem{corollary}{\textbf{Corollary}}

\newtheorem{lemma}{\textbf{Lemma}}

\newtheorem{remark}{\textbf{Remark}}

\DeclareMathOperator*{\argmin}{arg\,min}

%

\setlength{\columnsep}{0.203 in}

\title{Emergent Cooperation for Energy-efficient Connectivity via Wireless Power Transfer\vspace{-10pt}}

\author{Winston Hurst, Anurag Pallaprolu and Yasamin Mostofi 
\thanks{Winston Hurst, Anurag Pallaprolu, and Yasamin Mostofi are with the Department of Electrical and Computer Engineering, University of California, Santa Barbara, USA (email: winstonhurst@ece.ucsb.edu, apallaprolu@ucsb.edu, ymostofi@ece.ucsb.edu). This work was supported in part by NSF RI award 2008449 and in part by ONR award N00014-23-1-2715.}
}

\begin{document}
\IEEEpubid{\begin{minipage}{\textwidth}\centering
    {\vspace{1in}\footnotesize \copyright 2024 IEEE.  Personal use of this material is permitted.  Permission from IEEE must be obtained for all other uses, in any current or future media, including reprinting/republishing this material for advertising or promotional purposes, creating new collective works, for resale or redistribution to servers or lists, or reuse of any copyrighted component of this work in other works.}
\end{minipage}}

\maketitle

\begin{abstract}
This paper addresses the challenge of incentivizing energy-constrained, non-cooperative user equipment (UE) to serve as cooperative relays. We consider a source UE with a non-line-of-sight channel to an access point (AP), where direct communication may be infeasible or may necessitate a substantial transmit power. Other UEs in the vicinity are viewed as relay candidates, and our aim is to enable energy-efficient connectivity for the source, while accounting for the self-interested behavior and private channel state information of these candidates, by allowing the source to ``pay" the candidates via wireless power transfer (WPT). We propose a cooperation-inducing protocol, inspired by Myerson auction theory, which ensures that candidates truthfully report power requirements while minimizing the expected power used by the source. Through rigorous analysis, we establish the regularity of valuations for lognormal fading channels, which allows for the efficient determination of the optimal source transmit power. Extensive simulation experiments, employing real-world communication and WPT parameters, validate our theoretical framework. Our results
demonstrate over 71\% reduction in outage probability with as few as 4 relay candidates, compared to the non-cooperative scenario, and as much as 70\% source power savings compared to a baseline approach, highlighting the efficacy of our proposed methodology.
\end{abstract}
\vspace{-0.2in}
\section{Introduction}
The forthcoming 6G wireless standard promises to revolutionize wireless communication by delivering significant improvements in reliability, data rates, and energy efficiency~\cite{IMT-2030}. Achieving the desired technical performance depends on several emerging technologies and paradigms, which in turn bring their own challenges. For instance, high frequency communication (e.g., mmWave, THz) enables vastly increased capacity, but suffers from acute shadowing and penetration losses \cite{Hemadeh2018CST}. At the same time, given the ever-increasing number of connected devices, there is a growing need for scalable system design, with reduced reliance on centralized coordination mechanisms, to produce the desired performance.

This paper proposes the use of wireless power transfer (WPT) \cite{Hossain2019Access} to induce cooperative behavior among non-cooperative communication devices, with the end goal of enabling connectivity and/or reducing energy consumption. More specifically, we consider a setting in which a large-scale blockage (\textit{e.g.,} a wall) obstructs the line-of-sight (LoS) link between a user equipment (UE) and an access point (AP), so that direct communication is challenging. To reduce the power required to connect to the AP, the blocked user may offer to send a payment of power (via WPT) to a more advantageously placed UE, in exchange for relay services. The core problem we address in this paper is the identification of the optimal relay candidate from a potentially extensive set of options, along with the determination of the associated power cost.

Extensive prior work considers the use of wireless power transfer and related technologies to enable relaying in cooperative settings, in contexts such as IoT networks or device-to-device communication \cite{Li2021IoTJ, salim2022IoTaIS}. Other works, such as \cite{Xu2022ICC}, more explicitly acknowledge the need to incentivize the relay operation, but still assume a basic level of cooperation.
\begin{figure}
    \centering
    \includegraphics[width =0.85\linewidth]{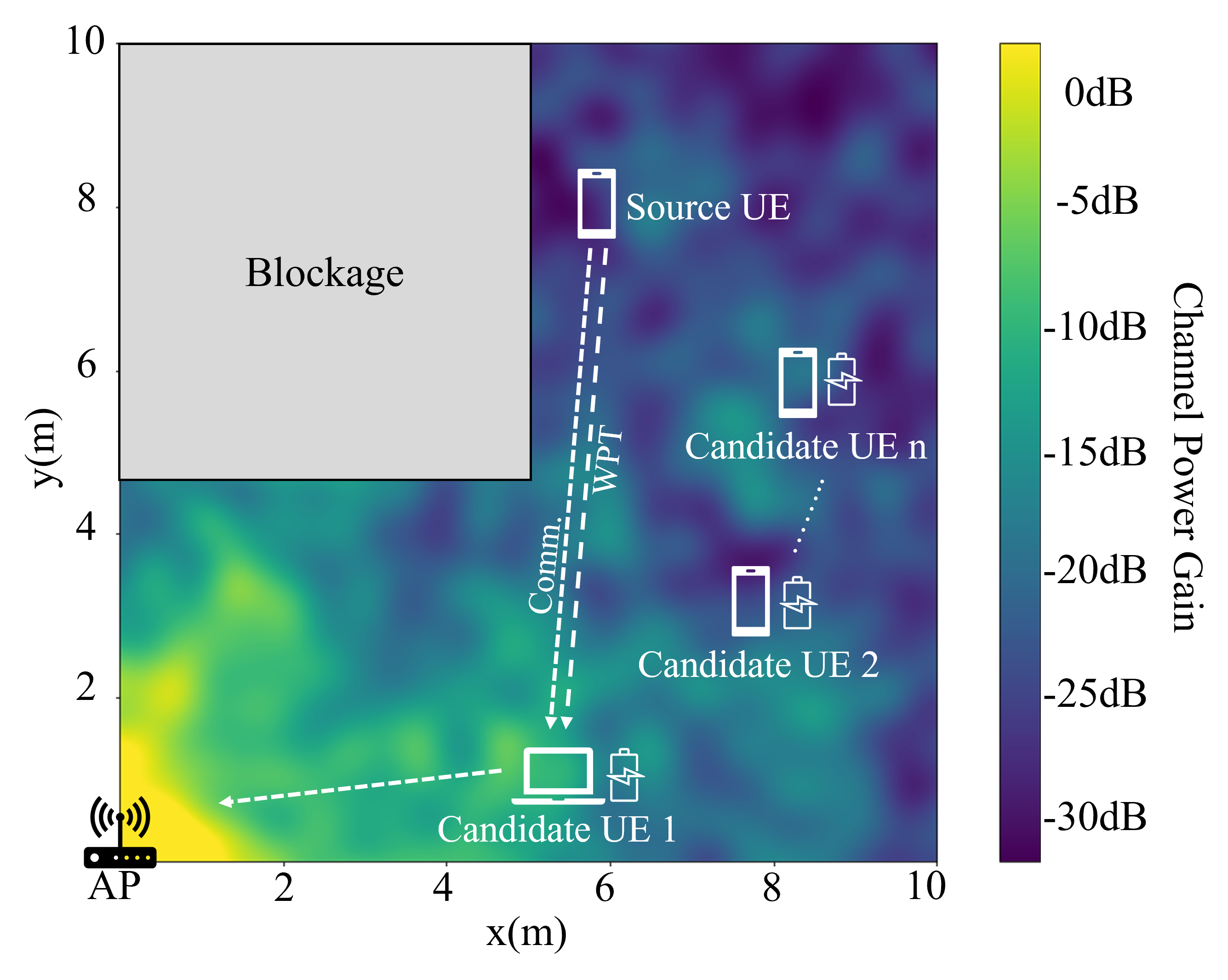}
    \vspace{-8pt}
    \caption{Fundamental scenario considered in this work: A single \textit{source} UE attempts to communicate with a blocked AP using one of the $n$ \textit{relay candidate} UEs, incentivizing cooperation with wireless power transfer.}
    \vspace{-18pt}
    \label{fig:system_setup}
\end{figure}
We instead consider fully \textit{non-cooperative} relay candidate devices that are energy limited, and possibly mobile (\textit{e.g.,} a cellular phone or autonomous vehicle), thus requiring an incentive to provide the relay service. We approach the scenario from a game-theoretic perspective, inspired by several-decades-old seminal work on auction theory, to develop a protocol which creates emergent cooperation among non-cooperative entities.

Game theory has long been an important tool in the design and analysis of communication systems \cite{GTforWCN2012HanNiyatoSaadBasarHJorungnes, TRO19_MuralidharanMostofi}, and recent years have seen the use of auctions linked to WPT in conjunction with energy markets \cite{Ni2019ICCT}. These works, however, apply auctions in the fiscal domain, using money as the payment medium. In contrast, we propose the use of wireless power itself as an RF-native ``currency", thus eliminating the need for complex configuration and integration between the physical wireless system and financial infrastructure.

Our contributions are succinctly stated below:

\begin{itemize}[leftmargin=0.15in]
    \item We consider a source UE with a non-line-of-sight (NLoS) channel to an AP, so that communication either requires significant transmit power or is impossible altogether. This source seeks to use WPT to incentivize non-cooperative, energy limited UEs to act as cooperative relays (see Fig.~\ref{fig:system_setup}).
    \item We pose the problem of finding a cooperation-inducing protocol which minimizes the energy consumption of the source, while accounting for the self-interested rationality of the relay candidates and their private channel state information, which is unavailable to the source.
    \item We show that the solution to our problem is given by a Myerson auction \cite{Myerson1981MOR}, with valuations determined by the minimum source transmit power needed for the relay candidates to be neutral towards cooperation. We provide a rigorous proof that these valuations are \textit{regular} for wireless channels described by lognormal fading, which allows the optimal source transmit power to be found with a simple bisection search.
    \item We validate our theoretical findings through extensive simulation experiments, with communication and WPT parameter values taken from real-world experiments reported in the literature. Our results demonstrate a 71\% reduction in outage probability with as few as 4 relay candidates compared to the non-cooperative alternative, and as much as 48\% source power savings compared to a baseline approach. 
\end{itemize}

\section{System Model and Motivation}\label{sec:model}

We consider a system consisting of a single \textit{source} UE, $n$ battery-powered \textit{relay candidate} UEs, and a single AP, as depicted in Fig.~\ref{fig:system_setup}. We assume that the channel between the source and the AP is quite poor, e.g., due to a LoS blockage, so that direct communication with the AP is either impossible or requires an unreasonably large transmit power. Consequently, the source looks to induce one of the candidate UEs to act as its relay. Without loss of generality, we set the AP as the origin, and let $q_s$ and $q_{i}, \forall i \in N=\{1,..., n\}$ denote the 2D positions of the source and the $i^\text{th}$ candidate, respectively. We next describe our modeling of communication and WPT.

\subsection{Communication System Model}
To satisfy a given quality-of-service (QoS) requirement, e.g., maximum bit error rate at a minimum spectral efficiency, all communication links must maintain a minimum signal-to-noise ratio (SNR) of $\gamma_{\text{th}}$. 

The SNR across any link is given by $\gamma = PH/\sigma^2$, where $P$ is the transmit power, $H$ is the channel coefficient (given by the square of the modulus of the complex channel gain between the transmitter and receiver), and $\sigma^2$ is the noise power. We denote the channel coefficient between the source (candidate) UE and the AP as $H_{s}$ ($H_i$), while $H_{si}$ denotes the channel between the source and candidate $i$. 


Based on \cite{Hemadeh2018CST, Samimi2016VTC}, we use a reference path loss channel model with lognormal small-scale fading, and we account for large scale fading by considering different sets of channel parameters for LoS and NLoS cases. In the dB scale, this model is parameterized by a path loss intercept, $H_{\text{PL}, 0, \text{dB}}^c$, a path loss exponent, $n_{\text{PL}}^c$, and fading power, $(\sigma_{\text{SS, dB}}^c)^2$, where $c\in\{\text{L},\,\text{NL}\}$ indicates the corresponding channel parameters for the LoS and NLoS cases. We assume that $H_i$ and $H_{si}$ are LoS channels, as depicted in Fig.~\ref{fig:system_setup}, as otherwise, the candidate would almost certainly not be useful as a relay. We thus express these channels as
\vspace{-0.025in}
\begin{equation*}
    H_{l,\text{dB}} = H_{\text{PL}, 0, \text{dB}}^{\text{L}} -  10n_{\text{PL}}^\text{L}\text{log}_{10}(d_l) - H_{\sigma_{\text{SS,dB}}^\text{L},l},
    \vspace{-0.025in}
\end{equation*}
for $l\in \cup_{i=1}^n\{si,\,i\}$, with $d_l$ the corresponding distance between the two endpoints of the channel and $H_{\sigma_{\text{SS,dB}}^{\text{L}},l}\sim \mathcal{N}(0,\,(\sigma_{\text{SS,dB}}^{\text{L}})^2)$. We also assume that $H_s$ is NLoS:
\vspace{-0.025in}
\begin{equation*}
    H_{s,\text{dB}} = H_{\text{PL}, 0, \text{dB}}^{\text{NL}} -  10n_{\text{PL}}^\text{NL}\text{log}_{10}(d_s) - H_{\sigma_{\text{SS,dB}}^{\text{NL}},s},
    \vspace{-0.025in}
\end{equation*}
where $d_s = ||q_s||_2^2$ is the distance between the source and the AP, and $H_{\sigma_{\text{SS,dB}}^{\text{NL}},s}\sim \mathcal{N}(0,\,(\sigma_{\text{SS,dB}}^{\text{NL}})^2)$. The small-scale fading is taken to be independent across all channels \cite{Samimi2016VTC}.

If the source attempts to directly communicate with the AP, it must use a power of $P_{s} = \gamma_{\text{th}} \sigma^2/H_s$. However, due to the NLoS channel, this power may be quite large and may exceed the maximum transmit power of the device, $P_\text{max}$, causing an outage. Alternatively, if the transmission is cooperatively relayed via candidate $i$, the source first transmits by expending $P_{si} = \gamma_{\text{th}} \sigma^2/H_{si}$, after which the candidate transmits with power $P_{i} = \gamma_{\text{th}} \sigma^2/H_{i}$ over the LoS link with the AP.

Under certain channel conditions (specifically, if $H_{si}^{-1} + H_i^{-1} < H_s^{-1}$) this relaying setup provides improved energy-efficient connectivity. However, the gains from cooperation are not equally distributed across the source and the chosen relay candidate, as the candidate expends energy from its limited store to help meet the QoS requirement of the source. Consequently, in scenarios where the source and relay are independent, non-cooperative actors, the candidate has no incentive to participate in this scheme. 

\vspace{-0.05in}
\subsection{Wireless Power Transfer}
To achieve the utility transfer necessary to enable candidate participation in the aforementioned scenario, we propose leveraging wireless power transfer from the source to the relay candidates. More specifically, the source uses simultaneous wireless information and power transfer (SWIPT), so that the total source transmit power is given by $P^{\text{tot}} = P_{si} + P_\text{WPT}$, where $P_{\text{WPT}}$ is the power allocated for WPT. The candidate then uses a dynamic power-splitting setup~\cite{Zhou2012GLOBECOM}, with the power splitting ratio given by $P_{\text{WPT}}/P^{\text{tot}}$. The power harvested by a candidate relay $i$ is then given by $P_{i}^\text{harv} = \alpha P_\text{WPT} A_rH_{si}$ \cite{Liu2013TC}, with $\alpha\in[0,1]$ denoting the efficiency of the power harvesting circuitry, and $A_r$ denoting the effective aperture area of the receiving antenna (assumed to be identical for all candidates).

The candidate will only participate in the relay scheme if it harvests at least $P_i$ power, which covers the subsequent transmission to the AP. Thus, the minimum source transmit power that induces cooperation with candidate $i$, is
\begin{equation}\label{eq:p_si_min}
    P_{si}^{\text{min}} = P_{si} + \frac{P_i}{\alpha A_r H_{si}}.
\end{equation}
If $P^{\text{tot}} \geq P_{si}^{\text{min}}$, the candidate harvests $P_i^{\text{harv}} = \alpha (P^{\text{tot}} - P_{si}) A_r H_{si} \geq P_i$, and any surplus power can be used for charging its battery or powering on-board sensors. However, $P_{si}^{\text{min}}$ is not exactly known to the source, as we next detail in our information availability assumptions.

\subsection{Information Assumptions}
We assume that both the source and the relay candidates individually know the value of the channel coefficient between themselves and the AP, but do not have precise information about the channel between other UEs and the AP. Additionally, the channel coefficient between the source and a candidate is known to both parties, but unknown to other candidates. 

We also assume that the source and each candidate can estimate the channel, $H_{i}$, between any candidate and the AP using estimation frameworks such as \cite{TWC11}, in conjunction with information acquired from prior data collected during operation, including the coordination process presented in Section~\ref{sec:prob_form}. Specifically, the channel parameters $H_{\text{PL}, 0, \text{dB}}^{\text{L}}$, $n_{\text{PL}}^{\text{L}}$ and $\sigma_{\text{SS,dB}}^\text{L}$, along with candidate locations, $q_i$, and source location $q_s$, are known to all, but the specific realization of the fading component $H_{\sigma_{\text{SS,dB}}^\text{L},i}$ is known only to candidate $i$. Analyzing the impact of errors and differences in parameter estimation is an important avenue for future exploration.


\section{Problem Formulation}
\label{sec:prob_form}
We now mathematically state our WPT-incentivized relay scenario and construct an optimization problem for efficient selection of a relay candidate. Our objective is to design a protocol, known beforehand to all UEs, that incentivizes cooperation between the source and candidate UEs, while adhering to informational limitations. Fundamentally, the protocol should ensure that the source is able to communicate with the AP whenever possible. Additionally, as the source initiates the process, our protocol prioritizes minimizing its overall energy consumption.

To achieve this, we propose the following streamlined coordination procedure: First, the source broadcasts a signal\footnote{This signal can also be used to estimate channels, $H_{si}$, and locations $q_i$, by employing integrated sensing and communication (ISAC) or other techniques.} that queries for feasible relay candidates for data transmission. Each candidate then responds by indicating the amount of power it requires to serve as a relay for the source, which, in general, may differ from the true value, $P_{si}^\text{min}$. The source then decides on which candidate to use as a relay, if any, and the total transmit power to use, $P^{\text{tot}}$. We next model this coordination procedure as an auction game and formally state our optimization objective to make the WPT-driven relay scenario power-efficient for the source.

\subsection{Bayesian Reverse Auction Formulation}
A reverse auction \cite{Nisan_Roughgarden_Tardos_Vazirani_2007} is a procurement process in which bidders compete to win a contract specified by a buyer. Bidders submit bids, $b = \{b_i\}_{i=1}^n$, to the buyer, who then decides which bidder, determined by a function $\rho(b)$, wins the auction, and the amount to pay, denoted by $x(b)$. The pair $(\rho,\,x)$ is referred to as the \textit{auction mechanism}~\cite{Myerson1981MOR}, and is itself a function of the buyer's \textit{valuation}, $v_0$, i.e., the cost for the buyer to directly fulfill the contract (we set $\rho(b) = 0$ in such a scenario). Additionally, each bidder has a valuation, $v_i$, which is conceptually its ``breakeven" price point; if $x(b)<v_i$, the bidder's ``cost" to fulfill the contract is less than the payment. While the buyer's valuation is known to all, the bidders' valuations are private, \textit{i.e.,} unknown to the buyer or each other, and thus, the bids may differ from valuations. In a \textit{Bayesian} auction, however, the valuations are drawn from known distributions, characterized by PDFs $f_i(v_i)$ with support over $V_i$.

Our problem is well modeled as a Bayesian reverse auction, with the candidates as bidders and the source as the buyer. The private valuation of each candidate is given by $v_i = P_{si}^{\text{min}}$, the minimum source transmit power which could induce cooperation, while the valuation of the source is $v_0=\min{}\{P_{\text{max}},\, P_s\}$, which is communicated to all candidates in the coordination protocol\footnote{This is necessary to avoid market manipulation by the source. Further study of trust mechanisms in our communication system is an important direction for future work.}. If $P_s\leq P_{\text{max}}$, then $v_0$ represents the power required for the source to communicate directly with the AP. Otherwise, $v_0$ is a placeholder bid indicating the maximum amount the source can pay. 
Furthermore, $x(b) = P^{\text{tot}}\in [0, P_{\max}]$ determines the price the source pays expressed as the total transmit power for both communication and WPT (if $\rho(b)\neq 0$).

The source's knowledge of the channel statistics, $H_i,\,\forall i \in N$, provides a distribution over valuations $P_{si}^{\text{min}}$. More specifically, $P_{si}^{\text{min}}$ is a function of the realization of the channel $H_i$, known to candidate $i$ at the time of placing the bid. Decomposing $H_i$ into the deterministic path loss component, $H_{\text{PL},i}$ and a stochastic small-scale fading component, $H_{\text{SS},i}\sim \text{Lognormal}(0, \sigma_{\text{SS}}^{\text{L}})$, we have
\begin{equation*}
   P_{si}^{\text{min}} = P_{si} + \frac{\gamma_{\text{th}}\sigma^2}{H_{\text{PL}, i}H_{\text{SS},i}\alpha A_r H_{si}}.
\end{equation*}
Therefore, the PDF of $P_{si}^{\text{min}}$, denoted by $f_{i}^{\text{min}}(\cdot)$, is given as
\begin{equation}\label{eq:pdf_valuation}
\begin{split}
    f_{i}^{\text{min}}(P_{si}^{\text{min}}) &= f_{\text{ln}, \sigma_{SS}^{\text{L}}}\big(H_{\text{SS},i}(P_{si}^{\text{min}})\big) \left|\frac{d H_{\text{SS},i}(P_{si}^{\text{min}})}{dP_{si}^{\text{min}}}\right|\\    
    &=f_{\text{ln}, \sigma_{SS}^{\text{L}}}\big(H_{\text{SS},i}(P_{si}^{\text{min}})\big) \frac{H_{\text{SS},i}(P_{si}^{\text{min}})}{(P_{si}^{\text{min}} - P_{si})},\\
    \text{with } H_{\text{SS},i}&(P_{si}^{\text{min}}) = \frac{\gamma_{\text{th}}\sigma^2}{(P_{si}^{\text{min}} - P_{si})H_{\text{PL}, i}\alpha A_r H_{si}},
\end{split}
\end{equation}
where $f_{\text{ln}, \sigma_{SS}^{\text{L}}}(\cdot)$ is the PDF of $H_{\text{SS},i}$. We denote the support of $P_{si}^{\text{min}}$ as $\mathcal{P}_i$, and the support over the joint valuation space is given by $\mathcal{P}=\prod_{i=0}^n \mathcal{P}_i$.

We do not disallow bids that differ from the valuations, but the \textit{revelation principle} \cite{Nisan_Roughgarden_Tardos_Vazirani_2007} states that for any auction with untruthful bids, there is an analogous auction with truthful bids that produces the same outcome. Therefore, without loss of generality, we may focus on the case where the candidates bid their true valuations, as long as we also ensure that honest bidding of valuations is optimal for all candidates, as we do in the sequel. Thus, a candidate's bid is given by its valuation, i.e., $b_i = v_i = P_{si}^{\text{min}}$.

\subsection{Formal Problem Statement}
\label{sec:formal_prob_statement}
The problem of minimizing the transmit power of the source through a choice of the mechanism $(\rho, x)$ can be formally expressed as follows:
\vspace{-0.05in}
\begin{mini!}|s|[2]                   
    {\rho, x}                               
    { \mathbb{E}_{v\sim \mathcal{P}} [x(v)]}   
    {\label{opt}}             
    {}                                
    \addConstraint{ \mathbb{E}_{v_{-i}\sim\mathcal{P}_{-i}}\left[\mathds{1}_{i=\rho(v)}( x(v) - v_i)\right] \geq 0,\label{opt:const:individually-rational}}    
    \addConstraint{\hspace{1in} \forall i\in N, v_i\in \mathcal{P}_i,\nonumber}
    \addConstraint{\mathbb{E}_{v_{-i}\sim\mathcal{P}_{-i}}\left[\mathds{1}_{i=\rho(b)}( x(v) - v_i)\right] \geq \label{opt:const:incentive-compatible}}
    \addConstraint{\hspace{0.5in}\mathbb{E}_{v_{-i}\sim\mathcal{P}_{-i}}\left[\mathds{1}_{i=\rho(v_{-i}, s_i)}( x(v_{-i}, s_i) - v_i)\right], \nonumber}
    \addConstraint{\hspace{1in} \forall i\in N,\, \forall v_i\in \mathcal{P}_i,\,  \forall s_i\in \mathcal{P}_i. \nonumber}
\end{mini!}
where $v=\{v_i\}_{i=0}^n$, $v_{-i}$ denotes the set of all valuations except $v_i$, $(v_{-i}, s_i)$ indicates a joint valuation profile $v$ with $v_i$ replaced by $s_i$, $\mathds{1}_{\mathcal{C}}$ is a binary indicator of whether condition $\mathcal{C}$ is met, and $\mathbb{E}_{v\sim \mathcal{P}} [x(v)]$ is referred to as the \textit{expected revenue}. Constraint~(\ref{opt:const:individually-rational}) enforces that no candidate expects to expend more energy by participating in the auction than they would otherwise (\textit{individual rationality}), and Constraint~(\ref{opt:const:incentive-compatible}) ensures that bidding the true valuation is an optimal strategy for all candidates (\textit{incentive compatibility}), as required per our application of the revelation principle. In the auction context, this problem corresponds to designing a revenue-maximizing auction, as described in the next section.


\textbf{Perfect Information Lower Bound:} To establish a benchmark for evaluation, we introduce a baseline lower bound for comparison. The \textit{perfect information} solution minimizes the source's power while respecting incentive compatibility, but ignores informational constraints, i.e., the source has perfect knowledge of $H_i$ for all $i$. This is denoted as $P^{\text{lb}} = \min_{i\in N }P_{si}^{\text{min}}$, and corresponds to the source making a take-it-or-leave-it offer of $P^{\text{lb}}-\epsilon$ for an arbitrarily small $\epsilon$, which will be accepted by $i^{\text{lb}} = \text{arg}\min_{i\in N} P_{si}^{\text{min}}$.

\section{Solution via Myerson Auctions}\label{sec:solution}
To solve the optimization problem of (\ref{opt}), we employ the seminal results presented in a several-decade-old work on auction theory to design the mechanism that maximizes the expected revenue of the source, while respecting individual-rationality and incentive-compatibility constraints. In \cite{Myerson1981MOR}, Myerson demonstrated the construction of such an optimal mechanism for forward auctions when the valuation distributions of the bidders satisfy certain technical constraints. We next show how to extend the optimal Myerson mechanism to reverse auctions, and then prove that our scenario satisfies the corresponding technical constraints.

\subsection{Buyer-Optimal Reverse Auctions}
To describe the optimal auction, we first introduce the \textit{virtual valuation}, $c_i(v_i)$, which for a reverse auction we define as follows: 
\begin{equation*}
    c_i(v_i) = v_i + \frac{F_i(v_i)}{f_i(v_i)}
\end{equation*}
where $F_i(v_i)$ and $f_i(v_i)$ are the CDF and PDF of $v_i$, respectively. We call $f$ a \textit{regular distribution} if $d c_i(v_i)/d v_i > 0,\,\forall v_i \in V_i$, \textit{i.e.,} the virtual valuation is strictly increasing in the valuation over the support. We now state the key result:
\begin{lemma}\label{lma:myerson}
    If all bidder valuations are independently distributed according to regular distributions, $f_i$, then for an auction with valuations $v=\{v_i\}_{i=0}^n$, the revenue-maximizing mechanism for a reverse auction is given by 
    \begin{equation}\label{eq:opt_rho}
        \rho^*(v) = \begin{cases}
            0 & \text{If } v_0\ <\ c_i(v_i)\ \forall i\in N\\
            \argmin_{i\in N} c_i(v_i) & \text{Otherwise}
        \end{cases}
    \end{equation}
    and 
    \begin{equation}\label{eq:opt_x}
        x^*(v) = \begin{cases}
            0 & \text{If } \rho^*(v) = 0\\
            z(\rho^*(v)) & \text{Otherwise}
        \end{cases} 
    \end{equation}
    with $z(i) = \sup \{s\,|\,c_i(s)<v_0\, \text{ and }\, c_{i}(s)< c_j(v_j),\ \forall j\neq i\}$
\end{lemma}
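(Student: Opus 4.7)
The plan is to mirror Myerson's classical construction for optimal forward auctions, adapting all sign conventions to the reverse-auction setting in which the source minimizes payment and each bidder prefers higher payment. The revelation principle, invoked in Section~\ref{sec:prob_form}, already lets me restrict attention to direct truthful mechanisms, so constraints~(\ref{opt:const:individually-rational}) and~(\ref{opt:const:incentive-compatible}) may be imposed at face value.

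First I would introduce the interim quantities $Q_i(v_i) = \mathbb{E}_{v_{-i}}[\mathds{1}_{i=\rho(v)}]$ and $X_i(v_i) = \mathbb{E}_{v_{-i}}[\mathds{1}_{i=\rho(v)}\, x(v)]$, and let $U_i(v_i) = X_i(v_i) - v_i Q_i(v_i)$ denote bidder $i$'s interim expected profit under truthful reporting. A standard envelope argument applied to~(\ref{opt:const:incentive-compatible}) shows that any IC mechanism must (i) have $Q_i(\cdot)$ nonincreasing in $v_i$, and (ii) satisfy $U_i(v_i) = U_i(\overline{v}_i) + \int_{v_i}^{\overline{v}_i} Q_i(u)\,du$, where $\overline{v}_i$ is the upper endpoint of $\mathcal{P}_i$. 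Minimizing expected payment subject to individual rationality~(\ref{opt:const:individually-rational}) forces $U_i(\overline{v}_i)=0$, yielding the closed form $X_i(v_i) = v_i Q_i(v_i) + \int_{v_i}^{\overline{v}_i} Q_i(u)\,du$.

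Substituting this into $\mathbb{E}[x(v)] = \sum_{i=1}^n \mathbb{E}_{v_i}[X_i(v_i)]$ and swapping the order of integration (the only routine calculation I would actually carry out) recasts the expected payment as
\begin{equation*}
\mathbb{E}_{v \sim \mathcal{P}}\!\left[\sum_{i=1}^{n}\mathds{1}_{i=\rho(v)}\, c_i(v_i)\right],
\end{equation*}
with $c_i(v_i) = v_i + F_i(v_i)/f_i(v_i)$ the virtual valuation from the lemma. Treating the no-allocation outcome as incurring cost $v_0$ (the source's own direct transmit power), the problem decouples into a pointwise minimization for each realization of $v$: allocate to whichever candidate achieves the smallest $c_i(v_i)$ whenever that minimum falls below $v_0$, and otherwise to nobody. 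This yields exactly~(\ref{eq:opt_rho}).

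The main obstacle is showing that the candidate allocation rule $\rho^*$ is genuinely IC-implementable, i.e., that the induced $Q_i^*$ is nonincreasing. This is precisely where the regularity hypothesis is needed: strict monotonicity of each $c_i(\cdot)$ implies that for every fixed $v_{-i}$, raising $v_i$ can only cause candidate $i$ to lose to some competitor or to the source, so $\mathds{1}_{i=\rho^*(v)}$ is nonincreasing in $v_i$ and $Q_i^*$ inherits the required monotonicity. Without regularity, one would need the standard ``ironing'' procedure, which this lemma explicitly sidesteps. Finally I would read off the payment rule from the envelope formula: the winner $i^*=\rho^*(v)$ must be paid its critical value, i.e., the supremum of reports $s$ with which $i^*$ would still have won against both the other candidates and the source, which matches the definition of $z(i^*)$ in~(\ref{eq:opt_x}) and is precisely the payment that keeps truthful reporting a best response.
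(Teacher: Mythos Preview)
Your proof is correct and self-contained, but the paper proceeds differently. Rather than re-deriving Myerson's characterization in the reverse-auction setting, the paper simply observes that negating all valuations and payments converts the reverse auction into a standard forward auction, after which the result is read off directly from \cite[Lemma~3]{Myerson1981MOR}. Your approach has the advantage of making explicit where the reverse-auction virtual valuation $c_i(v_i)=v_i+F_i(v_i)/f_i(v_i)$ comes from (via the integration-by-parts step applied to the envelope formula) and precisely how the regularity hypothesis enters to guarantee monotonicity of the induced $Q_i^*$; the paper's reduction, by contrast, buys a two-line proof but leaves all of those details inside the cited reference. Both routes are equally valid, and yours would be the more informative one to present if the surrounding text were not already leaning on Myerson's original work.
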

\begin{proof}
    We note that the reverse auction may be transformed into a forward auction with valuations $\{-v_i\}_{i=0}^n$ and payments $-x(v)$. The result then follows from \cite[Lemma 3]{Myerson1981MOR}.
\end{proof}
Therefore, the optimal Myerson reverse auction assigns the object to the bidder, $\rho(v) = i^*$, with the lowest virtual valuation, $c_{i^*}(v_{i^*})$, or the buyer, if its valuation, $v_0$, is lower than all virtual valuations. The payment $x^*(v)$ is the highest valuation that would still result in $i^*$ winning the auction, which motivates the placement of honest bids from the participants. We note that while, in general, a closed form expression for $z$ cannot be found, a simple bisection search can quickly find a solution, given the monotonicity of $c_i$.

\subsection{Optimal WPT Auction}
\begin{figure*}
    \centering
    \includegraphics[width =\linewidth]{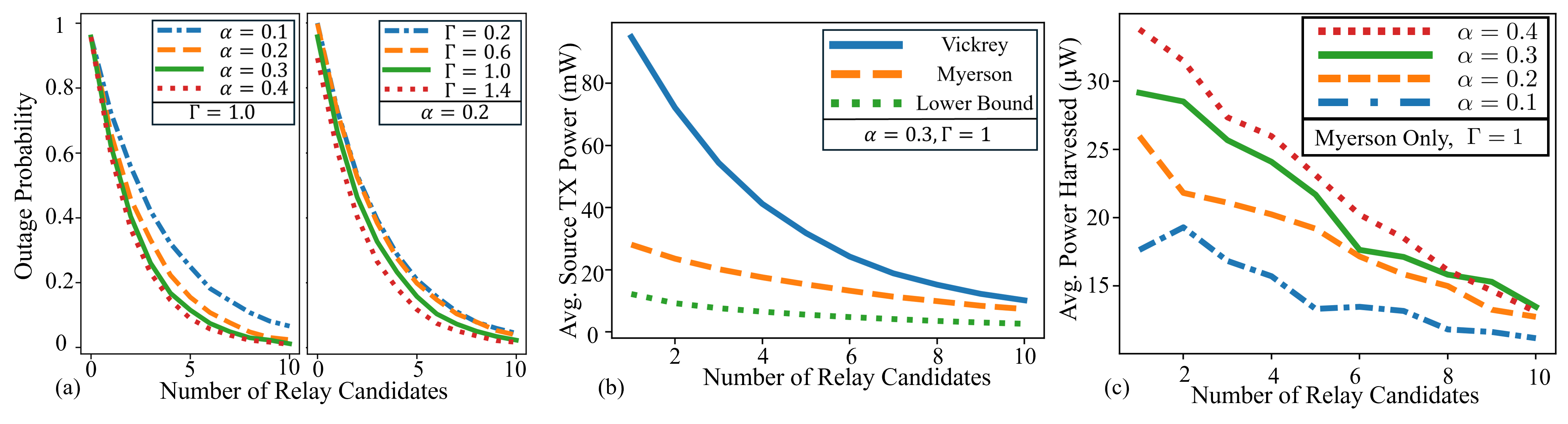}
    \vspace{-20pt}
    \caption{Sample simulation results: (a) Consistent reduction in outage probability with a Myerson auction-based relay selection, incentivized by WPT for (left) various values of WPT circuitry efficiency, $\alpha$ with $\Gamma = 1$ and (right) several scales of small-scale fading variance, $\Gamma$ with $\alpha=0.3$. (b) Higher relative source power savings of the Myerson mechanism with increasing number of relay candidates (c) Variation in harvested power with WPT efficiency ($\alpha$) as the number of relay candidates grows, with an overall reduction due to higher source power savings. See color PDF for optimal viewing.}
    \vspace{-10pt}
    \label{fig:allResults}
    \vspace{-0.1in}
\end{figure*}

To apply Lemma~\ref{lma:myerson}, we must prove that the valuations of the candidates, $v_i =P_{si}^{\text{min}}$, are regular. Using the dependence on the log-normally distributed fading component of the channel developed in Eq.~(\ref{eq:pdf_valuation}), we show in the following lemma that this is indeed the case.
 \begin{lemma}\label{lma:regularity}
     The distribution for valuations $v_i = P_{si}^{\text{min}}$, as given by Eq.~(\ref{eq:pdf_valuation}), is regular.
 \end{lemma}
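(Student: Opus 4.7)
My plan is to exploit the specific shifted-lognormal structure of $v_i = P_{si}^{\text{min}}$ and reduce regularity to a standard monotonicity property of the reverse-hazard ratio of a standard normal. From Eq.~(\ref{eq:pdf_valuation}) I first rewrite the valuation as $v_i = P_{si} + C_i/H_{\text{SS},i}$ with $C_i := \gamma_{\text{th}}\sigma^2/(H_{\text{PL},i}\alpha A_r H_{si}) > 0$ fixed at bid time, and $H_{\text{SS},i}$ lognormal with parameters $(0,(\sigma_{SS}^{\text{L}})^2)$. Since the reciprocal of a lognormal is lognormal, $v_i$ is a shifted lognormal on support $\mathcal{P}_i = (P_{si},\infty)$. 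The change of variable $X(v) := [\ln(v - P_{si}) - \ln C_i]/\sigma_{SS}^{\text{L}}$ maps the valuation distribution into a standard normal and yields $F_i(v) = \Phi(X(v))$ and $f_i(v) = \phi(X(v))/[\sigma_{SS}^{\text{L}}(v-P_{si})]$.

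Writing $R(x) := \Phi(x)/\phi(x)$ for the standard-normal inverse hazard ratio, the virtual valuation then collapses to
\[
c_i(v) \;=\; v + \sigma_{SS}^{\text{L}}(v - P_{si})\,R(X(v)).
\]
Differentiating with $X'(v) = 1/[\sigma_{SS}^{\text{L}}(v - P_{si})]$ and applying the chain rule produces the clean identity
\[
c_i'(v) \;=\; 1 + \sigma_{SS}^{\text{L}}\,R(X(v)) + R'(X(v)).
\]
It therefore suffices to establish $R(x) > 0$ and $R'(x) > 0$ for all $x \in \mathbb{R}$, which forces $c_i'(v) > 1 > 0$ on $(P_{si},\infty)$ and hence strict monotonicity of $c_i$, i.e., regularity.

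Positivity of $R$ is immediate. For its derivative, the identity $\phi'(x) = -x\phi(x)$ gives the clean recursion $R'(x) = 1 + x R(x)$, which is visibly positive for $x \geq 0$. For $x < 0$, positivity reduces to the Mills-type inequality $|x|\Phi(x) < \phi(x)$, which I would prove by one integration by parts on $\Phi(x) = \int_{-\infty}^{x} \phi(t)\,dt$, using $\phi(t) = -\phi'(t)/t$ on the negative axis to obtain $\Phi(x) = -\phi(x)/x - \int_{-\infty}^{x} \phi(t)/t^2\,dt < -\phi(x)/x$, and hence $xR(x) > -1$.

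The only genuine obstacle is this Mills-ratio step on the negative axis; everything else is a substitution, chain-rule differentiation, and the elementary recursion for $R'$. As a sanity check, an alternative high-level route would be to invoke the classical fact (Bagnoli--Bergstrom) that the lognormal CDF is log-concave, a property preserved under affine shifts, which directly gives nondecreasingness of $F_i/f_i$ and thus $c_i'(v) \geq 1 > 0$; I prefer the direct computation above because it is self-contained and yields a strict pointwise lower bound on $c_i'$ rather than relying on an external result.
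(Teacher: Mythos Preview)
Your proof is correct and follows essentially the same route as the paper: both reduce to a standard-normal change of variable and the Mills-ratio inequality, with your $X$ equal to the paper's $-z$ and your identity $c_i'(v)=1+\sigma_{SS}^{\text{L}}R(X)+R'(X)$ algebraically identical to the paper's $c_i'(v)=2+\frac{1-\Phi(z)}{\phi(z)}(\sigma_{SS}^{\text{L}}-z)$ once you expand $R'=1+XR$. The only difference is cosmetic: you prove the Mills bound $|x|\Phi(x)<\phi(x)$ for $x<0$ via integration by parts, whereas the paper cites Gordon~\cite{Gordon1941AMS} for the equivalent upper-tail version.
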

\begin{proof}
    From Eq.~(\ref{eq:pdf_valuation}), it can be shown that
    \begin{equation*}
    \begin{split}
        \frac{d c_i(v_i)}{d v_i} &=2 - \frac{F_{i}(v_i) }{f_{i}(v_i)^2} \frac{d f_{i}(v_i)}{d v_i}\\
        &= 2+\frac{1-F_{\text{ln}, \sigma_{SS}^{\text{L}}}(H_{\text{SS},i}(v_i)) }{f_{\text{ln}, \sigma_{SS}^{\text{L}}}(H_{\text{SS},i}(v_i)) H_{\text{SS},i}(v_i)} \\
        &\quad \quad \quad \quad \times[1-\frac{\text{ln}(H_{\text{SS},i}(v_i))}{(\sigma_{\text{SS}}^{\text{L}})^2}]\\
        &=2+\frac{1-\Phi(z)}{\phi(z)}[\sigma_{SS}^{\text{L}} - z]
    \end{split}
    \end{equation*}
    where $\Phi$ and $\phi$ are, respectively, the CDF and PDF of the standard normal distribution, and $z = \text{ln}(H_{\text{SS},i}(v_i))/(\sigma_{\text{SS}}^{\text{L}})$. By inspection, the above expression is positive for $z\leq0$. For $z>0$, we note that the ratio $r(z) = \big(1-\Phi(z)\big)/\phi(z)$ is the Mills ratio for the standard normal distribution, and it can be shown that 
    $r(z) < 1/z,\;\forall z>0$,\, \cite{Gordon1941AMS}. Therefore,
    \begin{equation*}
            2 + \frac{1-\Phi(z)}{\phi(z)}[\sigma_{\text{SS}}^{\text{L}}-z] > 2 - \frac{1-\Phi(z)}{\phi(z)}z>1,\, \forall z>0.\qedhere
            \vspace{-0.1in}
    \end{equation*}
\end{proof}
Having shown the regularity of the bidders' valuations, we can now apply the optimal mechanism to our reverse auction:
\begin{corollary}
    The optimal solution to (\ref{opt}) is given by Eqs.~(\ref{eq:opt_rho}) and (\ref{eq:opt_x}) with $v_0=P_{s},\,v_i = P_{si}^\text{min},\, \forall i\in N$, and $f_{i}(v_i) = f_i^{\text{min}}(P_{si}^\text{min})$, as given by Eq.~(\ref{eq:pdf_valuation}).
\end{corollary}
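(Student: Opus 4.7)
The plan is to recognize the corollary as a direct specialization of Lemma~\ref{lma:myerson} to the WPT-relay scenario, with the substantive technical work already carried out by Lemma~\ref{lma:regularity}. My approach is therefore to match each ingredient of the generic reverse-auction setting to its counterpart in the optimization problem~(\ref{opt}), verify the hypotheses of Lemma~\ref{lma:myerson}, and then invoke it.

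First, I would identify the buyer and the bidders. The source plays the role of the buyer, with valuation $v_0 = P_s$ equal to the transmit power required for direct communication with the AP (capped by $P_{\max}$ when $P_s > P_{\max}$, per the convention adopted in Section~\ref{sec:prob_form}). Each relay candidate $i\in N$ is a bidder whose private valuation $v_i = P_{si}^{\text{min}}$ is, by definition~(\ref{eq:p_si_min}), the break-even total source transmit power at which candidate~$i$ is indifferent to participating, so this matches exactly the role the valuation plays in the generic reverse auction. The PDF of $v_i$ is $f_i^{\text{min}}$ derived in~(\ref{eq:pdf_valuation}), and the support $\mathcal{P}_i$ plays the role of $V_i$ in Lemma~\ref{lma:myerson}.

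Next, I would verify the two hypotheses of Lemma~\ref{lma:myerson}: mutual independence and regularity of the bidder valuation distributions. Independence follows because $v_i = P_{si}^{\text{min}}$ depends only on the channels $H_{si}$ and $H_i$, and under the model in Section~\ref{sec:model} the small-scale fading coefficients are mutually independent across all links; together with the deterministic path-loss and known geometry, this makes $v_1,\dots,v_n$ mutually independent and independent of~$v_0$. Regularity of each $f_i^{\text{min}}$ is precisely what Lemma~\ref{lma:regularity} establishes. I would also remark that the objective of~(\ref{opt}) coincides with the ``expected revenue'' that Myerson's theorem addresses (via the forward-auction transformation used inside Lemma~\ref{lma:myerson}), and that constraints~(\ref{opt:const:individually-rational})--(\ref{opt:const:incentive-compatible}) are the standard individual-rationality and Bayesian incentive-compatibility conditions that enable the revelation-principle reduction underlying that lemma.

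With these identifications in place, Lemma~\ref{lma:myerson} applies verbatim and delivers the allocation rule~(\ref{eq:opt_rho}) and payment rule~(\ref{eq:opt_x}) as the unique revenue-optimal mechanism for~(\ref{opt}), completing the proof. The only genuine technical obstacle---establishing regularity for the lognormal-induced valuation law---was already absorbed into Lemma~\ref{lma:regularity}, so I anticipate no further hurdle; what remains is essentially bookkeeping to ensure the WPT-specific notation lines up with Myerson's framework.
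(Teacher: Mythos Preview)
Your proposal is correct and follows exactly the paper's approach: the paper's proof is simply ``This follows from Lemmas~\ref{lma:myerson} and~\ref{lma:regularity},'' and you have spelled out precisely the identifications and hypothesis-checks that make that one-line argument valid. The additional verification of independence (via the assumed independence of the small-scale fading terms) and the matching of the objective and constraints to Myerson's framework are helpful elaborations, but they do not depart from the paper's reasoning.
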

\begin{proof}
    This follows from Lemmas~\ref{lma:myerson} and \ref{lma:regularity}.
\end{proof}

\begin{remark}
    While the above results have been derived for lognormal fading, similar results may be derived for other statistical channel models, even if they are not regular. However, if not regular, a more complex method must be used~\cite{Myerson1981MOR}, which may be less computationally tractable.
\end{remark}

The interpretation of the auction result in the context of the underlying communication system requires brief consideration. When $\rho(v)=0$, the source ``wins" the auction (no suitable candidate is found), and if, additionally, $P_{s} \leq P_{\text{max}}$, the source communicates directly with the AP with power $x(v)=P_s$. However, if $P_{s} > P_{\text{max}}$, the source will be unable to connect to the AP, and $x(v)=P_{\text{max}}$ is no longer representative of the transmit power (which is $0$), as no communication occurs. In particular, communication will occur under the perfect info baseline but will \textit{not} occur under the Myerson auction protocol if the following three conditions are met: the source cannot communicate directly ($P_s >P_{\text{max}}$), the source wins the auction ($c_i(v_i) > P_{\text{max}}\forall i \in N$), and there exists at least one candidate, $i$, such that $P_{si}^\text{min} \leq P_{\text{max}}$. However, as we will see the experimental results of the Section~\ref{sec:numerical_examples}, the Myerson auction still greatly increases the likelihood of successful communication.

\section{Numerical Examples}\label{sec:numerical_examples}
 \vspace{-0.05in}
 To verify the validity of our proposed approach, we run several simulation experiments with system parameters as reported from experimental studies. We consider the scenario depicted in Fig.~\ref{fig:system_setup}, in which a source seeks to connect with an access point with which it has a NLoS channel. We randomly place up to 10 relay candidates with LoS channels to both the AP and the source. We then follow the auction protocol described in Section~\ref{sec:solution} to select a candidate relay (if any) and determine the transmit power of the source. Throughout, we assume all links communicate using 8-QAM modulation, with a target bit error rate (BER) of $1\mathrm{e}{-6}$, so that the target SNR in dB is given by $\gamma_{\text{th,dB}} = 33.18$ \cite{WirelessComms2005Goldsmith}. Communication noise is modeled as AWGN with power $\sigma^2 = -75\,$dBm, and the maximum transmit power of the source is $100\,$mW. The channel parameters are given as follows: $H_{\text{PL},0, dB}^{\text{L}}=0$, $H_{\text{PL},0, dB}^{\text{NL}}=-25$, $n_{\text{PL}}^{\text{L}}=2.5$, $n_{\text{PL}}^{\text{NL}}=5.76$, $\sigma_{SS,dB}^{\text{L}} = 8.66$, and $\sigma_{SS,dB}^{\text{NL}} = 9.06$ \cite[Table VI]{Hemadeh2018CST},~\cite{Azar2013ICC}. For WPT, we take $A_r=1\,\text{cm}^2$ and consider $\alpha\in\{0.1, 0.2, 0.3, 0.4\}$. For several experiments, we vary both $\sigma_{SS,dB}^{\text{L}}$ and $\sigma_{SS,dB}^{\text{NL}}$ by scaling the given values with a factor, $\Gamma\in\{0.2, 0.6, 1.0, 1.4\}$. Each reported average is taken over a set of 10,000 simulations with randomly sampled channel fading, and, unless stated otherwise, random placement of the candidates.

 We next examine system performance in terms of outage probability, source transmit power, and harvested power, while exploring system parameters including circuitry efficiency, $\alpha$, fading power, $\sigma_{SS,dB}^{\text{L}}$, and number of candidates, $n$.

 \subsection{Outage Probability}
We first study the impact of our protocol on the outage probability of the source. Fig.~\ref{fig:allResults} (a) plots the outage probability of the source for different values of $n$, $\alpha$ and fading scale factors, $\Gamma$.  In the absence of relay candidates, the source experiences an outage probability of between $1$ and $0.89$, depending on the value of $\Gamma$, but we see that the presence of as few as four relay candidates, the outage probability drops below $0.2$. This probability drops to near $0$ for higher values of $n$. Furthermore, increasing either power efficiency, $\alpha$, or fading factor, $\Gamma$, reduces outage probability. 
 \vspace{-0.05in}
\subsection{Source Transmit Power}
We next examine the transmit power used by the source as determined by the auction protocol. To gauge energy efficiency, we compare the transmit power, $P^{\text{tot}}$, of the source under the Myerson mechanism to the perfect-information transmit power, $P^{\text{lb}}$ (described in Section~\ref{sec:formal_prob_statement}), which is a lower bound on $P^{\text{tot}}$. As an upper bound, we also compare against a standard sealed-bid, second-price auction (Vickrey auction), in which the source selects the candidate relay with the lowest valuation but transmits with a power equal to the second-lowest valuation \cite{vickrey1961JoF}. Fig.~\ref{fig:allResults} (b) shows significant power savings when using the proposed buyer-optimal Myerson auction protocol compared to a standard Vickrey auction, particularly when fewer relay candidates are present. Furthermore, as the number of candidates increases, our proposed Myerson auction solution approaches the perfect-information lower bound. We also note that the performance of the Vickrey and Myerson auctions become comparable when the number of candidates is large, suggesting that in scenarios where candidates and sources have less information about each other's channels, the Vickrey auction may produce acceptable results given a sufficiently high density of UEs.
\begin{figure}
    \centering
    \includegraphics[width =\linewidth, trim={0.175in, 0.15in, 0.825in, 0.1in}, clip]{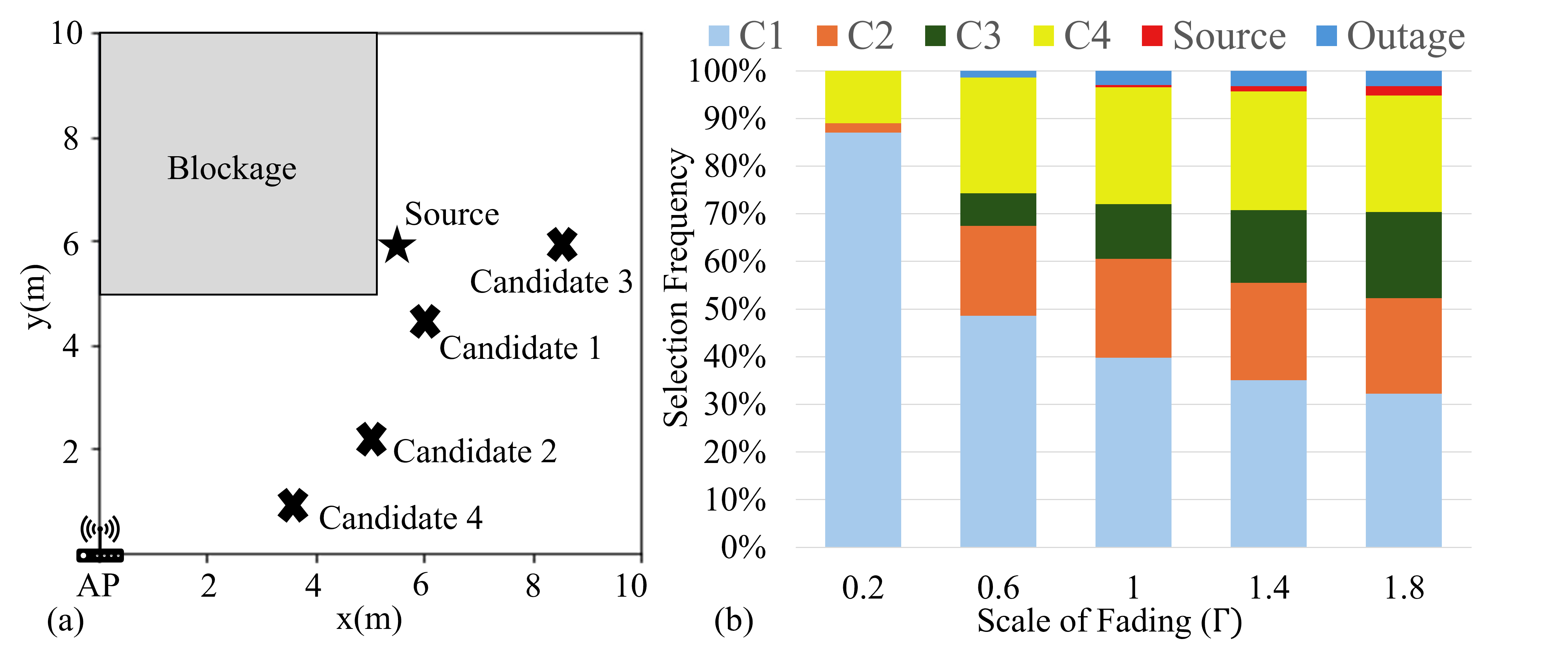}
    \vspace{-0.2in}
    \caption{(a) Sample placement of UEs for candidate selection study (b) Variation in selection likelihood across several scales ($\Gamma$) of fading. See color PDF for optimal viewing.}
    \vspace{-0.2in}
    \label{fig:scenario_selFreq}
\end{figure}

\vspace{-0.05in}
\subsection{Harvested Transmission Power}
\vspace{-0.05in}
While energy harvesting is not the chief objective of our protocol, it plays an important role in motivating the candidates to cooperate. Fig.~\ref{fig:allResults} (c) shows the average amount of power harvested by candidates for various efficiency factors, $\alpha$. As expected, more energy is harvested as the WPT efficiency is increased, and we see that the candidates receive small but significant boosts, on the order of tens of $\mu$W. However, as the number of candidates grows, competition allows the source to reduce the transmit power, $P^{\text{tot}}$, which in turn reduces the power available for harvesting.

\vspace{-0.05in}
 \subsection{Candidate Assignment}
 \vspace{-0.05in}
To better understand how the placement of a candidate affects its likelihood of winning the auction, we consider the system of $n=4$ candidates at fixed positions, as shown in Fig.~\ref{fig:scenario_selFreq} (a). For each value of $\Gamma$, we sample 10,000 realizations of channel fading and report the frequency with which each candidate is chosen, as shown in Fig.~\ref{fig:scenario_selFreq} (b). At low fading, we see that Candidate 1, whose proximity to the source results in more efficient WPT, consistently wins the auction. However, as the fading power increases, the location-dependent path loss component of the channel loses prominence, and the auction mechanism picks the other candidates more frequently.
 \section{Conclusion}
 \vspace{-0.05in}
This paper addresses the use of wireless power transfer (WPT) to incentivize energy-constrained user equipment (UE) to serve as cooperative relays in challenging non-line-of-sight communication scenarios. By proposing a cooperation-inducing protocol inspired by Myerson auction theory, we minimize the expected power used by the source while considering the self-interest of relay candidates and their private channel state information. Rigorous analysis establishes the regularity of valuations for channels modeled with lognormal fading, facilitating efficient determination of the optimal source transmit power. Extensive simulation experiments validate our framework, demonstrating a significant reduction in outage probability and power consumption, thus promoting energy-efficient cooperative relaying in wireless environments.
\vspace{-0.2in}
\bibliographystyle{IEEEtran}
\bibliography{main} 

\begin{thebibliography}{10}
\providecommand{\url}[1]{#1}
\csname url@samestyle\endcsname
\providecommand{\newblock}{\relax}
\providecommand{\bibinfo}[2]{#2}
\providecommand{\BIBentrySTDinterwordspacing}{\spaceskip=0pt\relax}
\providecommand{\BIBentryALTinterwordstretchfactor}{4}
\providecommand{\BIBentryALTinterwordspacing}{\spaceskip=\fontdimen2\font plus
\BIBentryALTinterwordstretchfactor\fontdimen3\font minus \fontdimen4\font\relax}
\providecommand{\BIBforeignlanguage}[2]{{%
\expandafter\ifx\csname l@#1\endcsname\relax
\typeout{** WARNING: IEEEtran.bst: No hyphenation pattern has been}%
\typeout{** loaded for the language `#1'. Using the pattern for}%
\typeout{** the default language instead.}%
\else
\language=\csname l@#1\endcsname
\fi
#2}}
\providecommand{\BIBdecl}{\relax}
\BIBdecl

\bibitem{IMT-2030}
ITU-R, ``{Framework and overall objectives of the future development of IMT for 2030 and beyond},'' ITU-R, Tech. Rep., 2023, {Draft New Recommendation}.

\bibitem{Hemadeh2018CST}
I.~A. Hemadeh, K.~Satyanarayana, M.~El-Hajjar, and L.~Hanzo, ``{Millimeter-Wave Communications: Physical Channel Models, Design Considerations, Antenna Constructions, and Link-Budget},'' \emph{IEEE Communications Surveys \& Tutorials}, vol.~20, no.~2, pp. 870--913, 2018.

\bibitem{Hossain2019Access}
M.~A. Hossain, R.~Md~Noor, K.-L.~A. Yau, I.~Ahmedy, and S.~S. Anjum, ``{A Survey on Simultaneous Wireless Information and Power Transfer With Cooperative Relay and Future Challenges},'' \emph{IEEE Access}, vol.~7, pp. 19\,166--19\,198, 2019.

\bibitem{Li2021IoTJ}
X.~Li, Q.~Wang, M.~Liu, J.~Li, H.~Peng, M.~J. Piran, and L.~Li, ``{Cooperative Wireless-Powered NOMA Relaying for B5G IoT Networks With Hardware Impairments and Channel Estimation Errors},'' \emph{IEEE Internet of Things Journal}, vol.~8, no.~7, pp. 5453--5467, 2021.

\bibitem{salim2022IoTaIS}
M.~M. Salim, H.~A. Elsayed, M.~S. Abdalzaher, and M.~M. Fouda, ``{RF Energy Harvesting Dependency for Power Optimized Two-Way Relaying D2D Communication},'' in \emph{2022 IEEE Int. Conf. on Internet of Things and Intelligence Systems (IoTaIS)}, 2022, pp. 297--303.

\bibitem{Xu2022ICC}
Y.~Xu, J.~Liu, H.~Takakura, Z.~Li, Y.~Ji, and N.~Shiratori, ``{Stackelberg Game-based Secure Communication in SWIPT-enabled Relaying Systems},'' in \emph{ICC 2022 - IEEE Int. Conf. Commun.}, 2022, pp. 1462--1467.

\bibitem{GTforWCN2012HanNiyatoSaadBasarHJorungnes}
Z.~Han, D.~Niyato, W.~Saad, T.~Basar, and A.~Hjorungnes, \emph{{Game Theory in Wireless and Communication Networks: Theory, Models, and Applications}}.\hskip 1em plus 0.5em minus 0.4em\relax Cambridge University Press, 2012.

\bibitem{TRO19_MuralidharanMostofi}
A.~Muralidharan and Y.~Mostofi, ``{Path Planning for Minimizing the Expected Cost Until Success},'' \emph{IEEE Transactions on Robotics}, vol.~35, no.~2, pp. 466--481, 2019.

\bibitem{Ni2019ICCT}
Y.~Ni, S.~Zhou, Q.~Wang, Y.~Zhou, and H.~Zhu, ``{Auction Game Based Phone-to-Phone Electricity Trading via Wireless Energy Transfer},'' in \emph{2019 IEEE 19th Int. Conf. Commun. Technol (ICCT)}, 2019, pp. 1213--1219.

\bibitem{Myerson1981MOR}
R.~B. Myerson, ``{Optimal Auction Design},'' \emph{Mathematics of Operations Research}, vol.~6, no.~1, pp. 58--73, 1981.

\bibitem{Samimi2016VTC}
M.~K. Samimi, G.~R. MacCartney, S.~Sun, and T.~S. Rappaport, ``{28 GHz Millimeter-Wave Ultrawideband Small-Scale Fading Models in Wireless Channels},'' in \emph{2016 IEEE 83rd Vehicular Technology Conference (VTC Spring)}, 2016, pp. 1--6.

\bibitem{Zhou2012GLOBECOM}
X.~Zhou, R.~Zhang, and C.~K. Ho, ``{Wireless information and power transfer: Architecture design and rate-energy tradeoff},'' in \emph{2012 IEEE Global Communications Conf. (GLOBECOM)}, 2012, pp. 3982--3987.

\bibitem{Liu2013TC}
L.~Liu, R.~Zhang, and K.-C. Chua, ``{Wireless Information and Power Transfer: A Dynamic Power Splitting Approach},'' \emph{IEEE Transactions on Communications}, vol.~61, no.~9, pp. 3990--4001, 2013.

\bibitem{TWC11}
M.~Malmirchegini and Y.~Mostofi, ``{On the Spatial Predictability of Communication Channels},'' \emph{IEEE Trans. on Wireless Commun.}, 2012.

\bibitem{Nisan_Roughgarden_Tardos_Vazirani_2007}
N.~Noam, T.~Roughgarden, E.~Tardos, and V.~V. Vazirani, \emph{{Algorithmic Game Theory}}.\hskip 1em plus 0.5em minus 0.4em\relax Cambridge University Press, 2007.

\bibitem{Gordon1941AMS}
R.~D. Gordon, ``{Values of Mills' Ratio of Area to Bounding Ordinate and of the Normal Probability Integral for Large Values of the Argument},'' \emph{The Ann. of Mathematical Statistics}, vol.~12, no.~3, pp. 364--366, 1941.

\bibitem{WirelessComms2005Goldsmith}
A.~Goldsmith, \emph{{Wireless Communications}}.\hskip 1em plus 0.5em minus 0.4em\relax Cambridge Univ. Press, 2005.

\bibitem{Azar2013ICC}
Y.~Azar, G.~N. Wong, K.~Wang, R.~Mayzus, J.~K. Schulz, H.~Zhao, F.~Gutierrez, D.~Hwang, and T.~S. Rappaport, ``{28 GHz propagation measurements for outdoor cellular communications using steerable beam antennas in New York city},'' in \emph{2013 IEEE Int. Conf. on Commun. (ICC)}, 2013, pp. 5143--5147.

\bibitem{vickrey1961JoF}
W.~Vickrey, ``{Counterspeculation, Auctions, and Competitive Sealed Tenders},'' \emph{The Journal of Finance}, vol.~16, no.~1, pp. 8--37, 1961.

\end{thebibliography}

\end{document}